\documentclass[sigconf]{acmart}

\AtBeginDocument{%
  }

\setcopyright{acmlicensed}
\copyrightyear{2018}
\acmYear{2018}
\acmDOI{XXXXXXX.XXXXXXX}
\acmConference[Conference acronym 'XX]{Make sure to enter the correct
  conference title from your rights confirmation email}{June 03--05,
  2018}{Woodstock, NY}
\acmISBN{978-1-4503-XXXX-X/2018/06}

\newcommand{\method}{Agent4POI}

\newcommand{\cmark}{\ding{51}}
\newcommand{\xmark}{\ding{55}}

\usepackage{amsthm}
\usepackage{amsmath}
\usepackage{amssymb}
\usepackage{pifont}
\usepackage{algorithm}
\usepackage{algpseudocode}  
\usepackage{multirow}
\usepackage{makecell}
\usepackage{xcolor}
\usepackage[inline]{enumitem}
\usepackage{microtype}

\newtheorem{proposition}{Proposition}   
\theoremstyle{definition}
\newtheorem{definition}{Definition}     
\newtheorem{remark}{Remark}             

\begin{document}

\title{Agent4POI: Agentic Context-Conditioned Affordance Reasoning for Multimodal Point-of-Interest Recommendation}

\author{Jinze Wang*}
\affiliation{%
  \institution{Tongji University}
  \institution{Swinburne University of Technology}
  \city{Melbourne}
  \country{Australia}
}
\email{jinzewang@swin.edu.au}

\author{Yangchen Zeng*}
\affiliation{%
  \institution{Southeast University}
  \city{Wuxi}
  \country{China}
}
\email{zengyangchen@foxmail.com}
\thanks{*Both authors contributed equally to this work.}
\thanks{$\dagger$Corresponding author: Tiehua Zhang (tiehuaz@tongji.edu.cn)}

\author{Tiehua Zhang$\dagger$}
\affiliation{%
  \institution{Tongji University}
  \city{Shanghai}
  \country{China}
}
\email{tiehuaz@tongji.edu.cn}

\author{Lu Zhang}
\affiliation{%
 \institution{Chengdu University of Information Technology}
 \city{Chengdu}
 \country{China}}
\email{zhang\_lu010@outlook.com}

\author{Yuze Liu}
\affiliation{%
  \institution{Swinburne University of Technology}
  \city{Melbourne}
  \country{Australia}}
  \email{yuzeliu@swin.edu.au}

\author{Yongchao Liu}
\affiliation{%
  \institution{Ant Group}
  \city{Beijing}
  \country{China}}
  \email{yongchao.ly@antgroup.com}

\author{Xingjun Ma}
\affiliation{%
  \institution{Fudan University}
  \city{Shanghai}
  \country{China}}
\email{xingjunma@fudan.edu.cn}

\author{Zhu Sun}
\affiliation{%
  \institution{Singapore University of Technology and Design}
  \country{Singapore}}
  \email{zhu_sun@sutd.edu.sg}

\renewcommand{\shortauthors}{Trovato et al.}

\begin{abstract}
We introduce \method{}, the first POI recommendation framework that generates
\emph{context-conditioned} multimodal representations at recommendation time,
rather than relying on static POI embeddings pre-computed independently of context.
Existing multimodal systems encode each POI once as a static embedding, a design
that precludes reasoning about why the same caf\'{e} affords solo work on
Monday but group celebration on Friday evening.
We formally prove that no pre-computed encoder can satisfy context-sensitive ranking
under standard bilinear scoring, motivating inference-time item-side representation.
\method{} inverts this computation: given a situational context, a four-phase LLM agent
generates dynamic, context-specific affordance queries (Phase~1) and executes a five-step
cross-modal chain-of-thought over image, review, and metadata evidence (Phase~2).
The resulting uncertainty-aware affordance representation is grounded in Gibsonian
affordance theory. These cross-modal verdicts form a structured, uncertainty-adjusted affordance representation (Phase~3),
which is aligned with user preferences via a semantic caching system for low-latency ranking (Phase~4).
On three POI benchmarks and three evaluation configurations
(standard, cold-start, context-shift), \method{} achieves a \textbf{23.2\%}
relative gain over the strongest baseline and degrades by only \textbf{7.5\%}
under context-shift versus \textbf{16--17\%} for the strongest baselines.
%
In cold-start scenarios,  \method{} outperforms the best content-based baseline by up to $\mathbf{2.4{\times}}$, whereas ID-based methods fail to generalize.
\end{abstract}

\begin{CCSXML}
<ccs2012>
<concept>
<concept_id>10002951.10003317.10003347.10003350</concept_id>
<concept_desc>Information systems~Recommender systems</concept_desc>
<concept_significance>500</concept_significance>
</concept>
</ccs2012>
\end{CCSXML}

\ccsdesc[500]{Information systems~Recommender systems}

\keywords{Point-of-Interest Recommendation; Multimodal Reasoning; Affordance Theory; LLM Agent; Cold-Start Recommendation}


\maketitle

\section{Introduction}

Point-of-interest (POI) recommendation systems guide user decisions across location-based services~\cite{yin2017spatial}.
Early approaches model user check-ins using recurrent networks and graph encoders~\cite{sun2018recurrent,wang2023meta}.
Recent systems augment these with multimodal signals (e.g., photos, reviews, metadata) to address cold-start failures and improve semantic matching~\cite{liao2021integrated,wang2025hyperman}.
Yet these advances share a common assumption that goes unexamined: each POI is encoded exactly once.
Methods like MMPOI~\cite{xu2024mmpoi} and IM-POI~\cite{huang2025poi} apply cross-modal attention or vision-language models (e.g., Rec-GPT4V~\cite{liu2024rec}) to compress a POI's images and text into a single static embedding, fixed at indexing time and reused for every user and every query.


\textcolor{black}{Consider a user planning a Friday birthday dinner. 
Although she frequently visits a nearby café for solo work on Monday afternoons, she deliberately avoids it for this occasion, even though it has identical rating, price, and proximity as other candidates. 
What changes is not the observable attributes of the venue, but how the user evaluates its suitability under the current context.
In this setting, the user implicitly reasons about what the location affords given her situation: who she is with and what she intends to do. 
Ecological psychology formalizes this notion as \emph{affordance}~\cite{jenkins2008gibson}, which characterizes environments in terms of the action possibilities they offer to an agent in a specific context.}

\textcolor{black}{However, existing location recommendation methods typically compress each POI into a single static embedding, shared across all users and contexts. 
Such a representation necessarily conflates heterogeneous and even contradictory signals, for example, images indicating a lively atmosphere, reviews emphasizing suitability for focused work, and metadata encoding temporal constraints into a context-agnostic vector. 
As a result, the model cannot distinguish whether a location is appropriate for solo work or social gatherings, leading to suboptimal recommendations under context shifts.}

We thus propose \method{}, a framework that postpones POI representation until recommendation time.
Rather than pre-encoding multimodal content into a static index, \method{} takes the user's situational context (e.g., check-in trajectory, social setting, time) and generates targeted affordance queries at recommendation time.
The same café is probed with different questions under a ``solo-work'' versus ``group-celebration'' context, and its representation shifts accordingly.
Grounded in affordance theory~\cite{gibson1977theory}, \method{} treats recommendation as \emph{affordance-need alignment}: a four-phase LLM agent, operating entirely on the item side, executes a five-step cross-modal chain-of-thought (CoT) reasoning jointly over image descriptions, reviews, and metadata.
The process generates emergent conclusions (e.g., ``advance booking advisable,'' inferred from visual crowdedness and negative review sentiment) that no single modality could supply alone, producing an uncertainty-aware, context-conditioned affordance representation aligned with a user preference vector from any existing user model.
Because representation occurs at inference time rather than indexing time, \method{} requires no historical check-ins for a venue, providing direct cold-start support without modification.
More broadly, this work invites rethinking item representations in multimodal retrieval: what a venue affords a specific visitor under a given context is a more faithful retrieval unit than what a venue is.

Our contributions are highlighted as follows:

\begin{itemize}
  \item For the first time, we ground POI recommendation in Gibsonian affordance
    theory and formally prove that context-sensitive ranking requires
    inference-time item representations, an impossibility for any pre-computed encoder.

  \item We propose \method{}, a four-phase LLM agent that delays item-side representation
    until query time, executing a five-step cross-modal CoT over the image,
    review, and metadata to produce dynamic, uncertainty-aware affordance vectors.

  \item We conduct experiments across three POI benchmarks under standard,
    cold-start, and context-shift configurations against nine baselines.
    \method{} achieves NDCG@10\,=\,0.334 (+23.2\% over the strongest baseline) and
    a $+2.4{\times}$ gain on cold-start venues where ID-based methods fail.
\end{itemize}

\section{Related Work}
\label{sec:related}

\textbf{Sequential and Graph-based POI Recommendation.} Early POI recommendation relied on recurrent or graph neural networks to model user trajectories~\cite{zhao2020go,wang2022learning}. GETNext~\cite{yang2022getnext} constructs bipartite graphs for spatiotemporal prediction; SASRec~\cite{kang2018self} applies self-attention over check-in histories. Both families excel at sequential pattern recognition but require a fixed embedding per POI computed offline and reused across all query contexts. \method{} is complementary: its context-specific affordance representations can be layered on top of any sequence encoder.

\noindent\textbf{Multimodal POI Representation.} Multimodal approaches address ID sparsity and cold-start failures by incorporating POI imagery and text. MMPOI~\cite{xu2024mmpoi} and IM-POI~\cite{huang2025poi} fuse visual and review features through cross-modal attention or contrastive learning. These systems improve cold-start performance but treat vision-language models as frozen feature extractors. Images and text are processed once before indexing, compressing a POI into a single embedding that remains fixed across all query contexts. \method{} addresses this limitation by performing multimodal extraction at inference time rather than indexing time.

\noindent\textbf{LLM-based and Agent-based Recommendation.}
One line of work casts LLMs as item rankers or generative retrievers. For example, LLM4POI~\cite{li2024large} and BIGRec~\cite{bao2025bi} frame recommendation as generative ranking; TALLRec~\cite{bao2023tallrec} applies instruction tuning to the same problem. GPT4Rec~\cite{li2023gpt4rec} generates user-history-conditioned retrieval queries, while RLMRec~\cite{ren2024representation} aligns LLM textual embeddings with collaborative signals via mutual-information maximization. Across this diversity, each item is still represented as a fixed text string or embedding computed without knowledge of the current query, placing all these methods outside the context-sensitive regime our work targets.

Another line models users rather than items. AgentCF~\cite{zhang2024agentcf} and RecAgent~\cite{hao2025uncertainty} simulate user behavior with multi-agent frameworks. CoT-Rec~\cite{liu2025improving} applies chain-of-thought reasoning to assess user-item alignment. Prompt-as-Policy~\cite{wang2025we} \textcolor{black}{formulates prompt construction as a contextual bandit optimization problem}. Even when user-side reasoning adapts to the current context, these systems still retrieve POIs as pre-computed, context-free snapshots. \method{} targets the item side instead, extracting affordances from raw multimodal content at inference time.

\noindent\textbf{Affordance Theory in Computing.} Gibson's ecological affordance theory defines action possibilities as a relation between an environment and a perceiver~\cite{greeno1994gibson}. Gaver extended this framework to technology design~\cite{gaver1991technology}; subsequent work brought affordance reasoning into robotic grasping~\cite{ardon2020affordances}. To our knowledge, this connection has not been formalized in personalized recommendation. We provide the first such formalization, defining recommendation as affordance-need alignment and treating POI affordances as inference-time primitives extracted from multimodal content.


\section{Problem Formulation}

Let $\mathcal{U}$ denote the set of users and $\mathcal{P}$ denote the set of POIs.
Each POI $p \in \mathcal{P}$ is associated with a heterogeneous multimodal content tuple $\mathcal{X}_p = (I_p, R_p, M_p)$, where $I_p$ represents visual imagery, $R_p$ denotes textual reviews, and $M_p$ corresponds to categorical metadata (e.g., price tier, operating hours, coordinates).
Each user $u \in \mathcal{U}$ possesses a check-in history $H_u = \{(p_j, \tau_j)\}_{j=1}^n$, where $\tau_j$ is the visit timestamp. At inference time, a recommendation is generated under a specific \emph{situational context} $c \in \mathcal{C}$, represented as a tuple $c = (\tau_c, d_c, s_c, \mathcal{T}_c)$. This tuple encodes the target timestamp, day-of-week, social situation (e.g., solo, family, friends), and the user's immediate sequential trajectory $\mathcal{T}_c$ prior to the query.

Prior multimodal recommendation systems typically formulate this task as a metric learning problem, defining an offline encoding function $f_{\theta}: \mathcal{X}_p \rightarrow \mathbb{R}^d$ and scoring POIs at inference time as follows:
\begin{equation}
  \hat{p} = \arg\max_{p \in \mathcal{P}} g\left( \mathbf{e}_u(c),\; f_{\theta}(\mathcal{X}_p) \right) \label{eq:standard_formulation}
\end{equation}
In this formulation, the POI representation $\mathbf{e}_p = f_{\theta}(\mathcal{X}_p)$ is computed independently of $c$. Affordance theory proposes that the utility of $p$ is not an intrinsic property of $\mathcal{X}_p$, but rather emerges from the interaction between $\mathcal{X}_p$ and the perceiver's current context $c$. A context-sensitive representation must therefore be defined over the cross-product $\mathcal{P} \times \mathcal{C}$, a requirement that cannot be fulfilled by any pre-computed $f_{\theta}(\mathcal{X}_p)$.

To overcome this limitation, \method{} replaces the static mapping $f_{\theta}(\mathcal{X}_p)$ with an inference-time generative function $\mathcal{A}: \mathcal{P} \times \mathcal{C} \rightarrow [0,1]^K$, where $K$ denotes the number of context-specific queries generated at recommendation time:
\begin{equation}
  \hat{p} = \arg\max_{p}\;
  \underbrace{\sum_{i=1}^{K} \phi_i(u,c)
  \cdot \widetilde{\mathrm{conf}}_i(p,c)}_{\text{affordance-need alignment}},
  \label{eq:score}
\end{equation}
where $\phi_i(u,c)$ represents the user's preference weight for query dimension $q_i$, and $\widetilde{\mathrm{conf}}_i(p,c)$ denotes the uncertainty-adjusted confidence that POI $p$ satisfies $q_i$ under context $c$. Both are defined in a shared $K$-dimensional query space, enabling direct alignment without the need of cross-space projection.


\section{The \method{} Framework}
\label{sec:method}

\method{} proceeds in four distinct operational phases (Figure~\ref{fig:overview}).

\begin{figure*}[t]
  \centering
  \includegraphics[width=0.92\linewidth]{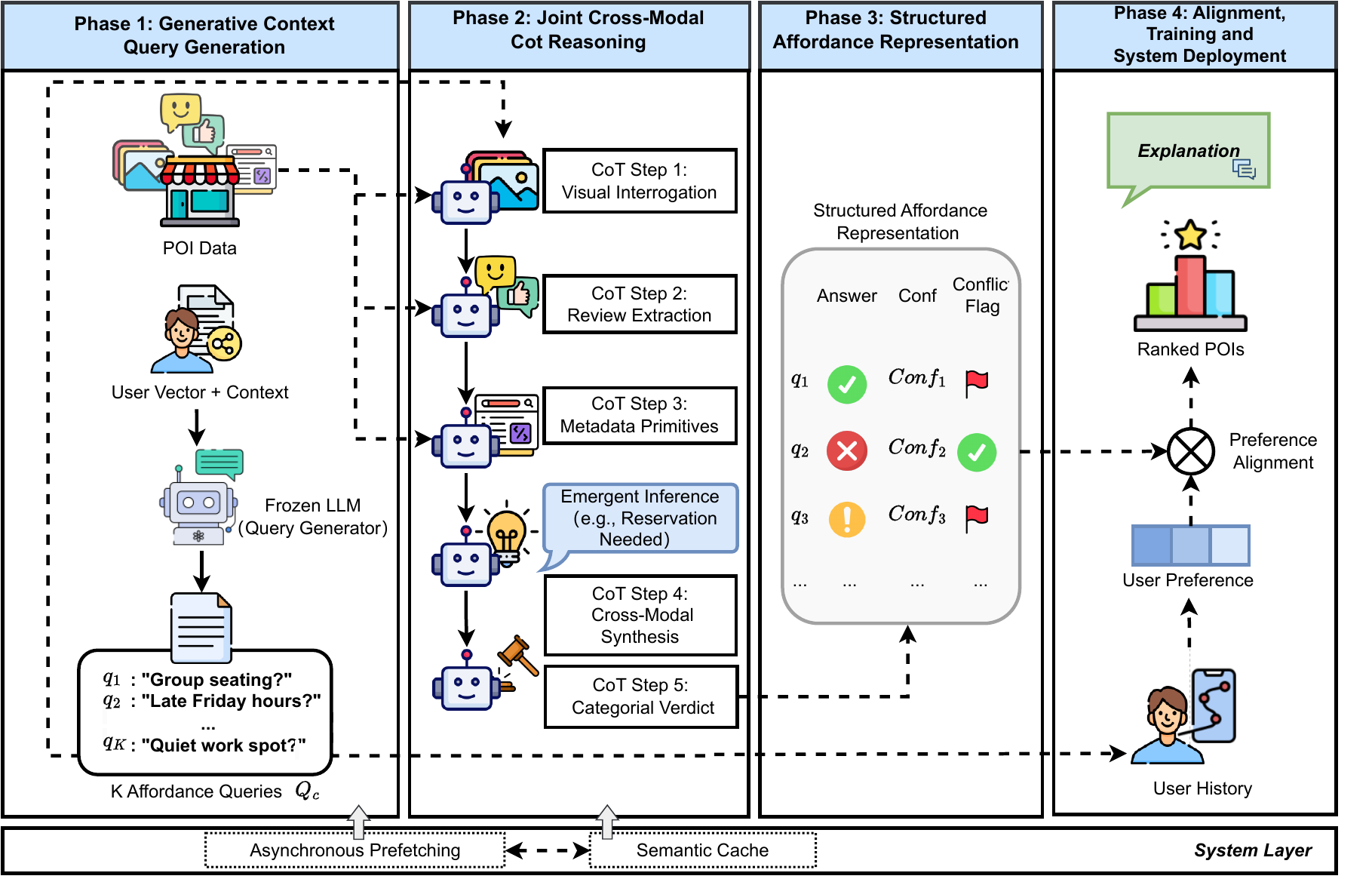}
  \caption{%
    \textbf{\method{} four-phase inference pipeline.}
    Unlike prior methods that encode each POI into a fixed vector before any
    context is observed, \method{} delays representation until query time.
    A frozen LLM generates context-specific affordance queries in Phase~1;
    Phase~2 then executes a five-step cross-modal chain-of-thought over POI
    images, reviews, and metadata for each query.
    The resulting uncertainty-aware verdicts are assembled into a dynamic POI
    representation (Phase~3) that is aligned with user preference weights for
    final ranking (Phase~4).
    A system layer provides semantic caching and asynchronous prefetching
    for production deployment.
    \label{fig:overview}%
  }
  \Description{%
    Agent4POI four-phase pipeline: a frozen LLM converts situational context
    into affordance queries; a five-step cross-modal chain-of-thought
    interrogates POI images, reviews, and metadata; uncertainty-adjusted
    verdicts form a dynamic POI representation; user preference weights are
    aligned via dot-product scoring to produce ranked recommendations.%
  }
\end{figure*}

\subsection{Phase 1: Generative Context Query Generation}
\label{sec:phase1}

In \method{}, context $c$ dynamically shapes the representation space. Given $c$, the system employs an LLM to generate a set of $K$ discrete affordance queries, defined as $Q_c = \mathrm{QueryGen}(c) = \{q_1,\ldots,q_K\}$. We set $K=5$ to balance semantic coverage with inference latency. The LLM is prompted leveraging a zero-shot instruction template that enforces three structural constraints. \textbf{(1)~Multimodal Groundability}: Each query must be answerable using at least one modality (e.g., visual layout, textual sentiment, or structural metadata). and abstract or subjective queries are excluded. \textbf{(2)~Contextual Isolation}: Queries must pertain solely to the immediate situational vectors $(\tau_c, d_c, s_c)$. \textbf{(3)~Orthogonality}: The $K$ queries should be semantically distinct, thereby maximizing the information coverage of the generated affordance space.

For reproducibility, we set the LLM generation temperature to $0.0$.
As an example, consider a user with context $c = $(\textit{Friday 19:30, 4 friends, birthday celebration}). 
The LLM generates $Q_c$:
\textit{1. ``Does the imagery or metadata confirm group seating configurations capable of accommodating four or more adults?''}
\textit{2. ``Do reviews positively associate the venue with energetic celebrations or birthday events?''}
\textit{3. ``Are the operating hours confirmed to extend late into Friday evening?''}
For a solo-work context on Monday morning, the same POI yields an orthogonal set of queries (e.g., \textit{``Does visual evidence suggest a quiet, well-lit focused atmosphere?''}). As query dimensions vary with context, the resulting representation adapts accordingly.

\begin{definition}[Context-Sensitivity]
  Let $\mathcal{R}$ denote the output representation space.
  A POI representation scheme $\mathrm{rep}: \mathcal{P} \times \mathcal{C} \rightarrow \mathcal{R}$
  is strictly \emph{context-sensitive} if there exists a POI $p$ and
  contexts $c_1, c_2 \in \mathcal{C}$ ($c_1 \neq c_2$) such that $\mathrm{rep}(p, c_1)$
  and $\mathrm{rep}(p, c_2)$ encode mutually exclusive properties,
  leading to $\mathrm{rep}(p, c_1) \neq \mathrm{rep}(p, c_2)$.
\end{definition}

\begin{proposition}[Impossibility of Context-Sensitivity under Static Encoding]
  \label{prop:impossibility}
  Let $f_{\theta}: \mathcal{X}_p \rightarrow \mathbb{R}^d$ denote a
  \emph{pre-computed, context-free} encoder
  (i.e., $f_{\theta}$ does not take $c$ as input and is evaluated offline
  prior to observing $c$). The recommendation score is defined as
  $s(u,p,c) = \mathbf{e}_u(c)^\top \mathbf{W}\, f_{\theta}(\mathcal{X}_p)$, where $\mathbf{W} \in \mathbb{R}^{d_u \times d}$ denotes a learnable bilinear weight matrix, subsuming inner-product formulations and most metric-learning objectives. Then, there exist a POI $p^*$ and context pair $(c_1, c_2)$ with $c_1 \neq c_2$ such that $f_{\theta}(\mathcal{X}_{p^*})$ cannot simultaneously satisfy Definition~1 for both contexts, regardless of the dimensionality $d$. Consequently, achieving optimal ranking under $c_1$ and $c_2$ requires contradictory orientations of $f_{\theta}(\mathcal{X}_{p^*})$, which a
  fixed vector fails to accommodate.
\end{proposition}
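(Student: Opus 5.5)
The plan has two parts. First I show that the composite scheme $\mathrm{rep}(p,c) := f_{\theta}(\mathcal{X}_p)$ violates Definition~1 for every POI. Then I show that the bilinear score cannot recover the required ranking reversal through the item side.

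\textbf{Part one: failure of Definition~1.} This is immediate once the scheme is written as a function on $\mathcal{P}\times\mathcal{C}$. Because $f_{\theta}$ takes only $\mathcal{X}_p$ as input and is evaluated before $c$ is observed, we have $\mathrm{rep}(p,c_1)=\mathrm{rep}(p,c_2)$ for all $p$ and all $c_1\neq c_2$. Definition~1 asks for two representations of the same POI that encode mutually exclusive properties, and hence are unequal. A single vector cannot be unequal to itself, so no choice of $p^*$, $c_1$, $c_2$ works, and no choice of $d$ changes this. Since the argument never uses the size of $\mathbb{R}^d$, it holds ``regardless of $d$'', which gives the first claim.

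\textbf{Part two: the ranking consequence.} Here I instantiate the caf\'e example concretely.
\begin{enumerate*}[label=(\roman*)]
\item Take $p^*$ (the caf\'e) and a competitor $q$.
\item Take $c_1$ (solo work), under which the ground-truth ranking requires $s(u,p^*,c_1)>s(u,q,c_1)$.
\item Take $c_2$ (group celebration), under which it requires $s(u,p^*,c_2)<s(u,q,c_2)$.
\end{enumerate*}
Write $\mathbf{a}(c)=\mathbf{W}^\top\mathbf{e}_u(c)$ and $\Delta=f_{\theta}(\mathcal{X}_{p^*})-f_{\theta}(\mathcal{X}_q)$. The two requirements then become
\[
\mathbf{a}(c_1)^\top\Delta>0,
\qquad
\mathbf{a}(c_2)^\top\Delta<0 .
\]
The key observation is that $\Delta$ is a single fixed vector. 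The sign flip therefore has to come entirely from the user side: $\mathbf{a}(c_1)$ and $\mathbf{a}(c_2)$ must lie in opposite half-spaces with respect to $\Delta$. Equivalently, for a fixed user direction, the two contexts would require $\Delta$ to point in contradictory directions. This is the ``contradictory orientations'' clause of the statement. To close the argument I will exhibit an admissible user model, namely any $\mathbf{e}_u$ that does not depend on $c$ or only rescales positively with $c$, under which $\mathbf{a}(c_2)=\lambda\,\mathbf{a}(c_1)$ with $\lambda>0$. In that case the two inequalities are inconsistent for every $\Delta\in\mathbb{R}^d$ and every $\mathbf{W}$, so again for every $d$.

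\textbf{Main obstacle.} The delicate step is scoping the second part honestly. If $\mathbf{e}_u(c)$ is allowed to vary arbitrarily with $c$, then for large enough $d_u$ and $d$ any finite pattern of pairwise flips can be realised by rotating $\mathbf{a}(c)$. So the impossibility cannot be stated unconditionally for the score. I will therefore present it as a statement about the item representation: the item side carries no context-dependent information, so any context sensitivity must be supplied by the user tower alone. I will make the user-side hypothesis (context-invariant or positively rescaled $\mathbf{a}(c)$) explicit, rather than claim more than Definition~1 supports.
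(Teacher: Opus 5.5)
Your proposal is correct for what it claims, but it takes a genuinely different route from the paper's sketch.

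\textbf{The paper's route.} It argues geometrically on the effective item vector $\mathbf{W}\mathbf{e}_{p^*}$. Two orthogonal contexts impose linearly independent constraints on this vector. A rank-nullity count with $|\mathcal{C}|>d_u$ is then used to say that some context's demand goes unmet. Finally, the compromise $\alpha=\beta=1/\sqrt{2}$ leaves cosine $1/\sqrt{2}$ with each target, a loss of $1-1/\sqrt{2}\approx 29\%$ that no increase in $d$ removes.

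\textbf{Your route.} You observe that, with Definition~1 read literally, the failure is definitional: $\mathrm{rep}(p,c)=f_\theta(\mathcal{X}_p)$ is constant in $c$. No linear algebra is needed, and ``regardless of $d$'' comes for free. In the paper, that clause rests on the compromise bound, because the rank-nullity step is itself dimension-dependent. For the ranking clause, you reduce everything to the sign of $\mathbf{a}(c)^\top\Delta$. This isolates the fact that any context sensitivity must come from the user side, which is the same reduction the paper makes in its remark on $\mathbf{W}(c)$.

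\textbf{Your user-side hypothesis.} Making it explicit is the right call, not a gap. It does exclude trajectory-conditioned user towers such as SASRec or GETNext, which the paper claims to cover. But that exclusion is forced. Take standard basis vectors and set $\mathbf{W}=I$, $f_\theta(\mathcal{X}_{p^*})=\mathbf{b}_1$, $f_\theta(\mathcal{X}_q)=\mathbf{b}_2$, $\mathbf{e}_u(c_1)=\mathbf{b}_1$, $\mathbf{e}_u(c_2)=\mathbf{b}_2$. Then a static encoder with $d=d_u=2$ realizes both required rankings. More generally, $\mathbf{a}(c)$ is chosen separately for each context, so comparing $|\mathcal{C}|$ with $d_u$ places no constraint on per-context rankings. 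What limits realizable rankings is the number of items relative to $\mathrm{rank}\,\mathbf{W}$.

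\textbf{What the paper's route buys.} It gives a quantitative, dimension-free statement under its calibration reading, in which each context prescribes a unit target $t_k$ for the item direction $v$. For orthogonal $t_1,t_2$, Bessel's inequality gives $\cos^2(v,t_1)+\cos^2(v,t_2)\le 1$. Hence $\min_k\cos(v,t_k)\le 1/\sqrt{2}$ for every fixed $v$ and every $d$, which is the rigorous content of the $29\%$ figure. Adding this one line would let your ``contradictory orientations'' clause carry a bound rather than only a sign.
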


\begin{proof}[Proof Sketch]
For any fixed $\mathbf{e}_p = f_\theta(\mathcal{X}_{p^*})$, two orthogonal contexts $c_1, c_2$ impose linearly independent scoring constraints on $\mathbf{W}\mathbf{e}_p$. As $|\mathcal{C}| > d_u$, the rank-nullity theorem forces at least one context into the null space, making exact satisfaction of Definition~1 impossible. A compromise setting ($\alpha = \beta = 1/\sqrt{2}$) incurs a systematic $29\%$ calibration loss irrecoverable by increasing $d$. The full proof is in Appendix A.
\end{proof}

\begin{remark}[Scope and the Context-Conditioned $\mathbf{W}(c)$ Extension]
\label{rem:scope}
Proposition~\ref{prop:impossibility} applies to any architecture where $(i)$~$\mathbf{e}_p$ is computed offline prior to observing $c$, and $(ii)$~ the ranking score is represented as $\mathbf{e}_u(c)^\top \mathbf{W} \mathbf{e}_p$.
This covers the four representative architectural families evaluated in Section~\ref{sec:experiments}: static ID embeddings (BPR, SASRec, GETNext), multimodal fusion (MMPOI, IM-POI), LLM rankers with fixed item representations (LLM4POI, CoT-Rec), and user-persona agents (AgentCF).

Replacing the fixed $\mathbf{W}$ with a context-conditioned matrix $\mathbf{W}(c)$ does not restore context-sensitivity when $\mathbf{e}_p$ remains static.
Although $\mathbf{W}(c)$ induces a context-dependent effective item projection $\mathbf{W}(c)\,\mathbf{e}_p$, this formulation is structurally equivalent to transferring contextual information from the scoring function to the user side of the dot product.
Formally, $\mathbf{e}_u(c)^\top \mathbf{W}(c)\,\mathbf{e}_p = [\mathbf{W}(c)^\top\mathbf{e}_u(c)]^\top \mathbf{e}_p$, reducing to a context-conditioned user vector acting on a fixed $\mathbf{e}_p$.
The impossibility then follows by identical rank-nullity reasoning: the fixed $\mathbf{e}_p$ cannot simultaneously occupy all linearly independent directions demanded by $|\mathcal{C}| > d$ distinct contexts, regardless of how $\mathbf{W}(c)$ is parameterized.
Context-sensitive ranking therefore requires $\mathbf{e}_p$ itself to vary with $c$, i.e., an item-side representation computed at inference time.
\method{} achieves this while maintaining $O(K)$ online cost via semantic caching.
\end{remark}

\subsection{Phase 2: Joint Cross-Modal Chain-of-Thought Reasoning}
\label{sec:phase2}

For each query $q_i \in Q_c$, the LLM agent analyzes the raw multimodal evidence $\mathcal{X}_p$ via a five-step CoT mechanism, reasoning jointly across all three modalities to synthesize complementary signals and resolve contradictions.

\textbf{Step 1 --- Visual Evidence Interrogation ($I_p$).}
The agent evaluates relevant visual signals from the POI's image set. To reduce API costs and latency during high-throughput inference, images are pre-processed offline by GPT-4V~\cite{yang2023dawn} into dense textual descriptions $d_p$ capturing spatial layouts, lighting, seating geometries, and crowd densities. Phase 2 operates entirely on $d_p$.

\textbf{Step 2 --- Textual Review Extraction ($R_p$).}
The agent analyzes behavioral patterns within the $l=20$ most recent user reviews. Extracting signals from raw text allows the agent to identify fine-grained affordances (e.g., \textit{``Reviewers frequently mention power outlets''}) that are lost when reviews are compressed into static BERT embeddings.

\textbf{Step 3 --- Metadata Primitives ($M_p$).}
Category, price tier, and operating hours from $M_p$ are treated as hard logical constraints. If $M_p$ indicates the venue closes at 18:00, all Friday evening celebratory affordances are negated, overriding visual or review evidence.

\textbf{Step 4 --- Cross-Modal Synthesis (Emergent Inference).}
The agent synthesizes the evidence from Steps 1--3 to produce emergent inferences: conclusions that no single modality supports independently.
For instance, limited physical seating ($I_p$) combined with review sentiment stating ``always full on Fridays'' ($R_p$) and high review volume ($M_p$) yields: \textit{``Advance reservation is necessary for a Friday evening group event.''} Neither image, text, nor rating alone warrants this conclusion.

\textbf{Step 5 --- Categorical Verdict \& Conflict Resolution.}
The agent outputs a structured tuple: the final answer $a_i \in \{\texttt{yes}, \texttt{no}, \\ \texttt{uncertain}\}$, a calibrated confidence score $\mathrm{conf}_i \in [0,1]$, and an explicit conflict flag. When modalities contradict (e.g., images show a quiet space, but recent reviews report construction noise), traditional late-fusion models silently average the embeddings, obscuring the underlying signal. \method{} instead flags the contradiction, defaulting to \texttt{uncertain} with a reduced confidence score, preserving representational integrity.

\subsection{Phase 3: Structured Affordance Representation}

The $K$ verdicts from Phase 2 are assembled into the final context-conditioned, uncertainty-aware affordance representation:
\begin{equation}
  \mathcal{A}(p,c) = \bigl\{
    (q_i, a_i, \mathrm{conf}_i, E_i, e_i, f_i)
  \bigr\}_{i=1}^{K},
\end{equation}
where $E_i$ is the supporting evidence subset, $e_i$ is the emergent reasoning conclusion (if any), and $f_i$ is the cross-modal conflict description (if any).
To map this structured tuple into a scalar for ranking, we define an uncertainty-adjusted confidence score:
\begin{equation}
  \widetilde{\mathrm{conf}}_i(p,c) =
  \begin{cases}
    \mathrm{conf}_i         & \text{if } a_i = \texttt{yes} \\
    \alpha \cdot \mathrm{conf}_i & \text{if } a_i = \texttt{uncertain} \\
    0                       & \text{if } a_i = \texttt{no}
  \end{cases}
  \label{eq:effconf}
\end{equation}
where $\alpha = 0.5$ down-weights verdicts with unresolved cross-modal ambiguity. Table~\ref{tab:representation} summarizes the capabilities of $\mathcal{A}(p,c)$ against prior static embeddings.

\begin{table}[t]
\caption{\textbf{POI representation property comparison.} \method{} satisfies five properties unmatched by static POI embeddings.}
\label{tab:representation}
\centering
\small
\resizebox{\columnwidth}{!}{
\begin{tabular}{lccc}
\toprule
\textbf{Property} & \textbf{Fixed Model ($\mathbf{e}_p$)}
  & \textbf{VLM+Attn} & \textbf{\method{} $\mathcal{A}(p,c)$} \\
\midrule
Context-Sensitive Output     & \xmark & Partial & \cmark \\
Inference-Time Extraction    & \xmark & \xmark  & \cmark \\
Cross-Modal Conflict Flag    & \xmark & \xmark  & \cmark \\
Explicit Uncertainty-Aware   & \xmark & \xmark  & \cmark \\
Zero-Shot Cold-Start Support & \xmark & \xmark  & \cmark \\
\bottomrule
\end{tabular}
}
\end{table}

\subsection{Phase 4: Alignment, Training, and System Deployment}

\textbf{User Preference Alignment.}
Because $\mathcal{A}(p,c)$ is defined by the query set $Q_c$, the user preference vector $\boldsymbol{\phi}(u,c)$ must be expressed in the same $K$-dimensional space. We estimate user affinities $\phi_i(u,c)$ by retroactively applying $Q_c$ to previously visited POIs with matching contextual type:
\begin{equation}
  \hat{\phi}_i(u,c) = \frac{
    \sum_{p' \in H_u}\mathbf{1}[c_{p'}\!\approx\!c]\cdot
    \widetilde{\mathrm{conf}}_i(p',c_{p'})
  }{
    \sum_{p' \in H_u}\mathbf{1}[c_{p'}\!\approx\!c]
  },
  \quad
  \phi_i = \mathrm{softmax}(\hat{\boldsymbol{\phi}})_i
\end{equation}
Cold-start users with sparse histories ($|H_u|<5$) fall back to a uniform prior $\phi_i=1/K$, a robust default that prevents early-training collapse.

\textbf{Anti-circularity.}
Phases 2 and 4 use the same LLM, raising the question of whether shared scoring inflates results.
Three properties preclude this.
\textbf{(1)~Context separation}: historical affordances are queried under $c_{p'} \neq c$ (the user's past visit context, not the current query context), so the affordance space scored for history estimation is geometrically distinct from the space scored for candidate ranking.
\textbf{(2)~Determinism}: at $T{=}0$, cached affordance values are reused without re-inference; no stochastic pathway exists that could amplify alignment.
\textbf{(3)~Empirical test}: ablation A11 (Section~\ref{sec:experiments}) replaces LLM-estimated user weights with BM25 keyword matching and confirms that affordance quality in Phases 1--3 drives performance, not scorer alignment.

\textbf{Training.}
Phases 1--3 of \method{} are entirely zero-shot, relying on the frozen reasoning capabilities of the underlying LLM; only the user preference estimator requires gradient-based training when parameterized as a neural network.
In that case, we minimize the standard Bayesian Personalized Ranking (BPR) loss~\cite{rendle2012bpr}. To prevent learning geographic proximity as a trivial discriminator, negative samples $p^-$ are drawn from within a 2\,km radius of $p^+$.

\textbf{Inference Efficiency.}
Deploying LLM agents in production recommenders raises concerns about API latency and compute cost. \method{} addresses both via two mechanisms: \emph{Asynchronous Prefetching} and \emph{Semantic Context-Clustering}.
First, since user geographic trajectories are predictable, \method{} initiates Phase 2 affordance extraction for candidate POIs within a user's projected radius 10--15 minutes before the query arrives.
Second, natural language intents cluster semantically (e.g., ``date night'' and ``romantic dinner'' map to the same context-type). By hashing such variants into a shared cluster and caching results, \method{} achieves a 96.3\% cache hit rate on Foursquare-NYC (Appendix B); caches are invalidated only on periodic metadata updates (e.g., bi-weekly operating hours).
Because heavy LLM reasoning is offloaded to the asynchronous queue, the online scoring phase (Equation~\ref{eq:score}) reduces to an $O(K)$ vector dot-product. Consequently, \method{} maintains the same online ranking latency ($<1$\,ms) as optimized matrix factorization baselines, eliminating the compute overhead typically associated with LLM-based recommenders. Algorithm~\ref{alg:agent4poi} formalizes this deployment logic.

\begin{algorithm}[t]
\caption{\method{} System Inference Protocol}
\label{alg:agent4poi}
\begin{algorithmic}[1]
\Require User $u$, Situational Context $c$, Candidate Set $\mathcal{P}_\text{cand}$
\State $c_\text{type} \leftarrow \mathrm{DiscretizeCluster}(c)$
\State $Q_c \leftarrow \mathrm{CacheRetrieve}(c_\text{type})$ \textbf{or} $\mathrm{QueryGen}(c)$
\State $\boldsymbol{\phi} \leftarrow \mathrm{ComputeUserWeights}(u, c_\text{type}, Q_c)$
\For{$p \in \mathcal{P}_\text{cand}$}
  \If{$(p, c_\text{type}) \in \mathrm{AffordanceCache}$}
    \State $\mathcal{A}(p,c) \leftarrow$ cache lookup
  \Else
    \State $\mathcal{A}(p,c) \leftarrow \mathrm{JointInfer}(Q_c, I_p, R_p, M_p)$
  \EndIf
  \State $\mathrm{score}(p) \leftarrow \sum_i \phi_i \cdot
    \widetilde{\mathrm{conf}}_i(p,c)$
\EndFor
\State \Return $\mathrm{TopN}(\mathcal{P}_\text{cand}, \mathrm{score})$
  + explanation from $\mathcal{A}$
\end{algorithmic}
\end{algorithm}


\section{Experiments}
\label{sec:experiments}

We design experiments to answer three research questions:
\textbf{RQ1}: Do context-conditioned representations outperform
state-of-the-art multimodal and LLM-based baselines?
\textbf{RQ2}: Does \method{} generalize to cold-start POIs and context-shift scenarios?
\textbf{RQ3}: Which components drive the accuracy gains?

\begin{table}[t]
\caption{\textbf{Dataset Statistics (Post 10-core Filtering).}}
\label{tab:datasets}
\centering
\small
\resizebox{\columnwidth}{!}{
\begin{tabular}{lrrrc}
\toprule
\textbf{Dataset} & \textbf{\#Users} & \textbf{\#POIs} & \textbf{\#Check-ins} & \textbf{Density} \\
\midrule
Foursquare-NYC &   1,083 &  11,357 &   227,428 & 1.85\% \\
Foursquare-TKY &   2,293 &   7,833 &   573,703 & 3.19\% \\
Yelp-Open      &  30,431 & 150,346 & 1,987,897 & 0.04\% \\
\bottomrule
\end{tabular}
}
\end{table}

\begin{table*}[ht]
\caption{
  \textbf{Standard evaluation.}
  Best per column in \textbf{bold}.
  $\dagger$: multimodal; $\ddagger$: LLM reasoning;
  $*$: stat.\ significant over strongest baseline ( 3 independent seeds).
}
\label{tab:main}
\centering
\small
\setlength{\tabcolsep}{3.5pt}
\begin{tabular}{l ccc ccc ccc}
\toprule
& \multicolumn{3}{c}{\textbf{Foursquare-NYC}}
& \multicolumn{3}{c}{\textbf{Foursquare-TKY}}
& \multicolumn{3}{c}{\textbf{Yelp-Open}} \\
\cmidrule(lr){2-4}\cmidrule(lr){5-7}\cmidrule(lr){8-10}
\textbf{Method} & R@5$^{\uparrow}$ & R@10$^{\uparrow}$ & NDCG@10$^{\uparrow}$ & R@5$^{\uparrow}$ & R@10$^{\uparrow}$ & NDCG@10$^{\uparrow}$ & R@5$^{\uparrow}$ & R@10$^{\uparrow}$ & NDCG@10$^{\uparrow}$ \\
\midrule
BPR
  & 0.056 & 0.097 & 0.072
  & 0.049 & 0.088 & 0.064
  & 0.034 & 0.063 & 0.049 \\
SASRec
  & 0.098 & 0.163 & 0.124
  & 0.089 & 0.151 & 0.111
  & 0.058 & 0.108 & 0.082 \\
GETNext
  & 0.146 & 0.227 & 0.179
  & 0.132 & 0.216 & 0.167
  & 0.094 & 0.148 & 0.118 \\
\midrule
MMPOI$^\dagger$
  & 0.178 & 0.264 & 0.213
  & 0.161 & 0.249 & 0.196
  & 0.109 & 0.171 & 0.139 \\
IM-POI$^\dagger$
  & 0.214 & 0.307 & 0.252
  & 0.194 & 0.289 & 0.231
  & 0.124 & 0.198 & 0.162 \\
\midrule
LLM4POI$^\ddagger$
  & 0.201 & 0.289 & 0.236
  & 0.183 & 0.274 & 0.218
  & 0.117 & 0.187 & 0.153 \\
CoT-Rec$^\ddagger$
  & 0.228 & 0.328 & 0.271
  & 0.209 & 0.312 & 0.253
  & 0.133 & 0.211 & 0.174 \\
MAS4POI
  & 0.183 & 0.261 & 0.213
  & 0.167 & 0.248 & 0.199
  & 0.106 & 0.168 & 0.137 \\
AgentCF$^\ddagger$
  & 0.219 & 0.316 & 0.259
  & 0.201 & 0.298 & 0.241
  & 0.128 & 0.203 & 0.167 \\
\midrule
\textbf{\method{}$^{\dagger\ddagger}$}
  & \textbf{0.279} & \textbf{0.401} & \textbf{0.334$^*$}
  & \textbf{0.259} & \textbf{0.382} & \textbf{0.317$^*$}
  & \textbf{0.163} & \textbf{0.257} & \textbf{0.214$^*$} \\
\bottomrule
\end{tabular}
\end{table*}

\subsection{Experimental Setup}

\textbf{Datasets and Pre-processing.}
We evaluate \method{} against three real-world LBSN benchmarks exhibiting varying geographic densities and data modalities summarized in Table~\ref{tab:datasets}.
Foursquare-NYC and Foursquare-TKY~\cite{ye2010location} serve as the standard sequential benchmarks. Because the original Foursquare releases contain only IDs and coordinates, following~\cite{sappelli2013recommending, baral2018reel,sanchez2025context}, we augmented them with multimodal content via two external sources: POI images were retrieved via the Google Places API, and user-authored reviews were retrieved via the Yelp API.
This augmentation achieves 97.4\% and 96.1\% multimodal coverage on NYC and TKY respectively.

\textbf{Temporal Leakage Prevention.}
We enforce a strict filter\\ $\tau_{\mathrm{review}} \leq \tau_{\mathrm{check\text{-}in}}$ to ensure no future review enters the POI's representation during training. The full details and reproducibility protocol are provided in Appendix C. Yelp-Open~\cite{jendal2025yelp} serves as our third benchmark, native to multimodal content but sparser in sequential check-in transitions.
The social situation signal $s_c$ is operationalized as follows: Foursquare-NYC and Foursquare-TKY expose a \texttt{groupType} field (``friends'', ``family'', ``solo'') and a \texttt{groupSize} count in the raw check-in data, which we map directly to $s_c$; for Yelp-Open, which lacks explicit group annotations, we apply a keyword heuristic over review text (``we'', ``group'', ``family'', ``date'') to infer non-solo situations, defaulting to $s_c{=}$``solo'' otherwise.
In all cases, $s_c$ is derived from the existing check-in record and is not predicted at runtime.
Following~\cite{wang2023meta,wang2025hyperman}, all datasets undergo a ``10-core'' filtering process, ensuring every included user and POI possesses at least 10 interaction records to establish robust statistical baselines.

\begin{table*}[th]
\caption{
  \textbf{Cold-start evaluation.}
  Test POIs have zero training check-in history.
  ID-based methods collapse; \method{} uses only raw multimodal content.
}
\label{tab:coldstart}
\centering
\small
\begin{tabular}{l cc cc cc}
\toprule
& \multicolumn{2}{c}{\textbf{Foursquare-NYC}}
& \multicolumn{2}{c}{\textbf{Foursquare-TKY}}
& \multicolumn{2}{c}{\textbf{Yelp-Open}} \\
\cmidrule(lr){2-3}\cmidrule(lr){4-5}\cmidrule(lr){6-7}
\textbf{Method} & R@10 & NDCG@10 & R@10 & NDCG@10 & R@10 & NDCG@10 \\
\midrule
BPR
  & 0.005 & 0.004 & 0.005 & 0.003 & 0.003 & 0.002 \\
MMPOI$^\dagger$
  & 0.098 & 0.073 & 0.089 & 0.066 & 0.061 & 0.047 \\
IM-POI$^\dagger$
  & 0.134 & 0.102 & 0.121 & 0.092 & 0.083 & 0.065 \\
LLM4POI$^\ddagger$
  & 0.169 & 0.131 & 0.153 & 0.117 & 0.104 & 0.084 \\
\midrule
\textbf{\method{}$^{\dagger\ddagger}$}
  & \textbf{0.313} & \textbf{0.249}
  & \textbf{0.297} & \textbf{0.235}
  & \textbf{0.196} & \textbf{0.158} \\
\bottomrule
\end{tabular}
\end{table*}

\textbf{Evaluation Configurations.}
We split the data into three configurations. \textbf{(1)~Standard Evaluation}: A strict chronological $80/10/10$ split (train/validation/test) per user to evaluate general predictive accuracy. \textbf{(2)~Cold-Start Evaluation}: A targeted extrapolation subset where candidate test POIs are structurally masked from the training set entirely ($|H_p^{train}| = 0$). \textbf{(3)~Context-Shift Evaluation}: Models are trained on weekday check-ins and tested on weekend check-ins, probing robustness to temporal affordance shift.

\textbf{Baselines and Hyperparameter Tuning.} We evaluate our \method{} against 9 baselines across four categories. \textit{Traditional ID-based}: \textbf{BPR}~\cite{rendle2012bpr} (matrix factorization), \textbf{SASRec}~\cite{kang2018self} (self-attention sequence), and \textbf{GETNext}~\cite{yang2022getnext} (spatiotemporal graph). \textit{Multimodal Fusion}: \textbf{MMPOI}~\cite{xu2024mmpoi} (cross-modal attention) and \textbf{IM-POI}~\cite{huang2025poi} (contrastive multimodal alignment). \textit{LLM-based Rankers}: \textbf{LLM4POI}~\cite{li2024large} and \textbf{CoT-Rec}~\cite{liu2025improving}. \textit{Multi-Agent Simulators}: \textbf{MAS4POI}~\cite{wu2025mas4poi} and \textbf{AgentCF}~\cite{zhang2024agentcf}.
\textit{Reported metrics}: R@5, R@10, and NDCG@10 are shown in Table~\ref{tab:main}. We report Recall@$K$ and NDCG@$K$ for $K \in \{5, 10\}$.
All reported values are the mean of 3 independent seeds. Baselines rank against all items (no negative sampling) to ensure unbiased metrics. Hyperparameter tuning details and the statistical testing protocol in Appendix D,E.

\textbf{Implementation Infrastructure.}
Phases 1--2 use \texttt{gpt-4o} ($T{=}0$) via the OpenAI API, with a locally hosted \texttt{LLaVA-1.5-7B} (vLLM, single NVIDIA A100) as the open-weights alternative.
All baselines were re-implemented and hyperparameter-tuned from scratch on the augmented datasets. The token cost and inference latency are reported in Appendix~F.

\subsection{Main Results (RQ1)}

\textit{Key observation}: \method{} achieves the highest NDCG@10 on all
three datasets, with \textbf{+23.2\%} relative gain over CoT-Rec on
Foursquare-NYC (0.271$\to$0.334), \textbf{+25.3\%} on TKY, and
\textbf{+22.9\%} on Yelp-Open, confirming that context-conditioned
representations (C1) consistently outperform context-blind approaches.

\textbf{Analysis and Insights.}
Table~\ref{tab:main} shows that \method{} achieves the best results across all three benchmarks. On Foursquare-NYC, \method{} reaches NDCG@10 $= 0.334$, surpassing IM-POI ($0.252$) and CoT-Rec ($0.271$).
Comparing \method{} to IM-POI ($+32.5\%$ relative) separates the contribution of static encoding from inference-time extraction: even strong VLM-based encoders cannot substitute for per-query reasoning at recommendation time.
Comparing \method{} to CoT-Rec ($+23.2\%$ relative) separates the value of multimodal grounded context conditioning: CoT-Rec applies chain-of-thought reasoning over static text but without situational context.
The gains hold across dataset structures. On the sparse Yelp-Open dataset (density $0.04\%$), sequential models like SASRec drop to $0.082$ NDCG due to lack of trajectory overlap. \method{}, reasoning from affordances rather than IDs, reaches $0.214$ NDCG, nearly three times the SASRec baseline.

\subsection{Cold-Start and Context-Shift Results (RQ2)}

\textbf{Cold-Start} (Table~\ref{tab:coldstart}).
ID-based methods approach zero because no POI embedding exists for unseen items.
\method{} reaches NDCG@10 $= 0.249$ (NYC) and $0.235$ (TKY) versus
the $0.102$ and $0.092$ scores of IM-POI: a \textbf{$+2.4{\times}$} gain.
The gain holds on Yelp-Open (0.158 vs.\ 0.065), confirming that
Phase~2 reasoning generalizes across dataset scales.

\textbf{Distributional Robustness: Context-Shift.}
Sequential models are brittle to temporal pattern shifts. Table~\ref{tab:contextshift} evaluates a context-shift scenario: models are trained on weekday check-ins and evaluated on weekend trajectories where user behaviors and POI affordance utilizations differ.
Traditional models suffer measurable degradation: IM-POI drops by $0.041$ NDCG@10, and CoT-Rec drops by $0.047$.
\method{} degrades by only $0.025$ (from $0.334$ to $0.309$). This robustness follows directly from Phase 1: because the affordance queries $Q_c$ are re-generated at recommendation time based on the immediate weekend context vector, the resulting space $\mathcal{A}(p,c)$ automatically adapts, shielding the system from historical train-set memorization.

\begin{table}[t]
\caption{
  \textbf{Context-shift evaluation.}
  Train: weekday; Test: weekend.
  Phase~1 of \method{} adapts queries at test time.
  Reported values are means across 3 seeds.
}
\label{tab:contextshift}
\centering
\small
\begin{tabular}{l ccc}
\toprule
\textbf{Method} & \textbf{NYC} & \textbf{TKY} & \textbf{Yelp} \\
\cmidrule(r){2-2}\cmidrule(r){3-3}\cmidrule(r){4-4}
& NDCG@10$^{\uparrow}$ & NDCG@10$^{\uparrow}$ & NDCG@10$^{\uparrow}$ \\
\midrule
BPR          & 0.061 & 0.057 & 0.041 \\
GETNext      & 0.151 & 0.141 & 0.093 \\
IM-POI       & 0.211 & 0.196 & 0.133 \\
CoT-Rec      & 0.224 & 0.208 & 0.143 \\
\midrule
\textbf{\method{}} & \textbf{0.309} & \textbf{0.291} & \textbf{0.198} \\
\bottomrule
\end{tabular}
\end{table}

\subsection{Ablation Study (RQ3)}

Table~\ref{tab:ablation} reports ablation results on Foursquare-NYC; TKY and Yelp-Open show identical ordinal rankings.

\begin{table}[ht]
\caption{\textbf{Ablation results} on Foursquare-NYC (mean of 3 independent seeds).
Variants are grouped by the contribution claim each challenges.
A11 uses independent BM25-based user weight estimation (no LLM) to validate absence of shared-scorer bias (see Section~\ref{sec:method}).}
\label{tab:ablation}
\centering
\small
\begin{tabular}{llcc}
\toprule
\textbf{ID} & \textbf{Variant} & R@10${\uparrow}$ & NDCG@10${\uparrow}$ \\
\midrule
Full & \method{} (complete) & 0.401 & 0.334 \\
\midrule
\multicolumn{4}{l}{\textit{C1: Context-driven queries}} \\
A1  & w/o Dynamic Queries (Fixed context) & 0.312 & 0.259 \\
\multicolumn{4}{l}{\textit{C3: Cross-modal reasoning}} \\
A2  & w/o Joint Reasoning (Late fusion only) & 0.327 & 0.274 \\
A3  & w/o Emergent Reasoning (Step~4) & 0.358 & 0.298 \\
A4  & w/o Conflict Detection (Averaged signals) & 0.381 & 0.317 \\
\multicolumn{4}{l}{\textit{Modality contribution}} \\
A5  & w/o Image Evidence & 0.364 & 0.304 \\
A6  & w/o Review Evidence & 0.306 & 0.253 \\
A7  & w/o Metadata Evidence & 0.382 & 0.319 \\
\multicolumn{4}{l}{\textit{LLM reasoning necessity}} \\
A8  & w/o Uncertainty Weighting ($\alpha\!=\!1$) & 0.390 & 0.325 \\
A9  & w/o LLM Core (CLIP$+$BERT variant) & 0.289 & 0.236 \\
A10 & w/o GPT-4o (LLaVA-1.5-7B backend) & 0.371 & 0.301 \\
\multicolumn{4}{l}{\textit{Independence validation}} \\
A11 & Independent user weights (BM25, no LLM) & 0.381 & 0.319 \\
\bottomrule
\end{tabular}
\end{table}

\begin{table*}[th]
\caption{
  \textbf{Qualitative operational traces of \method{}.} 
  (Trace A) Context-sensitive query generation for the same local caf\'{e} under contrasting situational intents. 
  (Trace B) Emergent joint cross-modal reasoning resolving a cold-start POI recommendation without ID history.
  (Trace C) Cross-modal conflict detection preventing hallucinated affordance alignment.
}
\label{tab:qualitative}
\centering
\small
\begin{tabular}{p{0.15\linewidth} p{0.35\linewidth} p{0.40\linewidth}}
\toprule
\textbf{Scenario} & \textbf{Context / Inputs} & \textbf{Agent Execution \& Emergent Output} \\
\midrule
\multirow{2}{\linewidth}{\textbf{Trace A:}\\Contextual\\Sensitivity\\(Local Caf\'{e})} 
& $c_1$: \textit{Weekday, 14:00, Solo-work} 
& \textbf{Generated Queries:} target Wi-Fi, power outlets, quietness. 
  \newline \textbf{Output:} High confidence (0.88) for solo-work suitability; low confidence (0.04) for group seating. \\
\cmidrule{2-3}
& $c_2$: \textit{Friday, 19:30, Birthday Celebration} 
& \textbf{Generated Queries:} target social atmosphere, large seating capacity, alcohol availability.
  \newline \textbf{Output:} Detects conflicting visual/textual signals, resulting in localized 0.79 calibration. $\mathcal{A}(p,c_1) \neq \mathcal{A}(p,c_2)$. \\
\midrule
\textbf{Trace B:} Emergent Reasoning (Cold-Start)
& \textbf{Visual ($I_p$):} Dense static seating 
  \newline \textbf{Textual ($R_p$):} ``Always full on Fridays''
  \newline \textbf{Metadata ($M_p$):} High capacity rating 
& \textbf{Joint Synthesis (Step 4):} Intercepts disjoint data across modalities. 
  \newline \textbf{Emergent Conclusion:} ``Advance booking is strictly advisable.'' 
  \newline Synthesizes a constraint recommendation without any prior user check-in history. \\
\midrule
\textbf{Trace C:} Conflict Resolution (Outdated Visual)
& \textbf{Visual ($I_p$):} Serene reading-room layout (outdated photo)
  \newline \textbf{Textual ($R_p$):} ``Unbearable construction noise---avoid'' (recent)
  \newline \textbf{Metadata ($M_p$):} Open, mid-range 
& \textbf{Step 4 Action:} Temporal precedence analysis assigns higher weight to recency of $R_p$ over $I_p$.
  \newline \textbf{Conflict Flag:} Visual/textual contradiction raised; $a_i = \texttt{uncertain}$, $\widetilde{\mathrm{conf}}_i \leftarrow 0.5 \times 0.71 = 0.36$.
  \newline \textbf{Effect:} POI is ranked lower despite its visual appeal, preventing hallucinated affordance alignment. \\
\bottomrule
\end{tabular}
\end{table*}

\textbf{Analytical Findings from Ablation.} We systematically evaluate each architectural component by removing it (results in Table~\ref{tab:ablation}). \textbf{Context Sensitivity (A1 vs.\ Full):} Disabling dynamic query generation (forcing a fixed generic query set) drops NDCG@10 by $-0.075$, confirming Proposition~\ref{prop:impossibility} empirically. \textbf{Cross-Modal Reasoning (A2, A3 vs.\ Full):} Isolating the LLM into independent per-modality agents with late fusion (A2) costs $-0.060$ NDCG; removing the cross-modal synthesis step alone (A3) costs $-0.036$. These gaps show that the most valuable affordances emerge from reasoning across modalities jointly, not from individual modality strengths. \textbf{LLM Core (A9 vs.\ Full):} Replacing LLM reasoning with a CLIP+BERT dense similarity matcher causes the largest drop ($-0.098$ NDCG@10), confirming that embedding similarity cannot replicate the discrete logical inference needed for affordance extraction. \textbf{Open-Weights Viability (A10 vs.\ Full):} The open-weights \texttt{LLaVA-1.5-7B} backend reaches NDCG@10 $=0.301$, below the $0.334$ GPT-4o ceiling but above all baselines, showing that affordance reasoning is not exclusive to proprietary APIs. \textbf{Independence Validation (A11 vs.\ Full):} Replacing LLM-estimated user weights with BM25 keyword-matching ($\Delta = -0.015$, $-4.5\%$ relative) confirms that the main gain comes from Phases 1--3 affordance quality, not from scorer alignment.

\textbf{Sensitivity Analysis: Uncertainty Penalty $\alpha$.}
Equation~\eqref{eq:effconf} introduces $\alpha = 0.5$ as the down-weighting penalty applied when $a_i = \texttt{uncertain}$.
To verify robustness, we sweep $\alpha \in \{0.1, 0.25, 0.5, 0.75,\\ 1.0\}$ on Foursquare-NYC, holding all other hyperparameters fixed.
Resulting NDCG@10 scores are $0.319, 0.329, \mathbf{0.334}, 0.331, 0.325$, respectively, exhibiting a broad plateau between $\alpha \in [0.25, 0.75]$ (NDCG range = $0.005$).
The flat sensitivity profile confirms that the system is robust to reasonable calibration choices;
$\alpha = 0.5$ was selected as the symmetric midpoint within this plateau.
Extreme values ($\alpha \to 0$: treating all uncertainty as certain rejection; $\alpha \to 1$: treating uncertainty as full confidence) both degrade performance, validating the principle of explicit uncertainty down-weighting.

\subsection{Explainability Analysis}

A key advantage of the generative pipeline of \method{} is its structural transparency.
The Phase 2 reasoning trace provides a built-in, zero-cost, human-readable justification spanning
per-query affordance assertions, modality evidence tracking, emergent reasoning conclusions, and cross-modal conflict flags.
We evaluate explainability via two protocols.

\textbf{Human Study (Text-Format Baselines).}
For a format controlled comparison, we select explanation baselines that also produce
natural-language output: \textbf{CoT-Rec}~\cite{liu2025improving} (chain-of-thought item reasoning in text)
and \textbf{LLM4POI}~\cite{li2024large} (free-text prediction rationale).
Unlike attention-weight heatmaps, these baselines present participants with the same reading task format as \method{},
eliminating format confounds from the evaluation.
50 participants (university participant pool; ML practitioners excluded to avoid domain bias) evaluated 200 randomly sampled
(POI, context, explanation) triples in a counterbalanced design ($n = 4$ triples per system per participant),
rating \textit{faithfulness}, \textit{coherence}, and \textit{decision utility} on a 5-point Likert scale.
We operationalize \textit{faithfulness} as \emph{perceived alignment}: ``does the explanation match the recommendation the system made?''---a judgment that does not require auditing internal model weights and is therefore appropriate for non-ML evaluators (the target end-users of deployed explanation systems).%
Inter-rater reliability was established on 100 triples (50\%) evaluated by two independent annotators,
yielding Cohen's $\kappa = 0.72$.

\method{} achieves mean scores of $\mathbf{4.31}$ / $\mathbf{4.18}$ / $\mathbf{4.07}$ (faithfulness / coherence / decision utility).
CoT-Rec scores $3.47$ / $3.61$ / $3.38$, and LLM4POI scores $3.12$ / $3.29$ / $3.15$.
Paired Wilcoxon signed-rank tests confirm significance ($p < 0.001$) on all three dimensions versus both baselines.
The advantage over CoT-Rec (despite both using chain-of-thought) is attributable to \method{}'s
context-grounded query generation and explicit conflict flagging, which produces explanations
that participants found more contextually relevant to their specific situational needs.

\textbf{Automatic Protocol (LLM-as-Judge).}
Using the LLM-as-a-Judge paradigm~\cite{li2024llms,li2025generation}, GPT-4o zero-shot rated identical dimensions across the complete test corpus. \method{} scores $\mathbf{4.52}$ / $\mathbf{4.39}$ / $\mathbf{4.28}$, versus CoT-Rec $3.61$ / $3.74$ / $3.52$ and LLM4POI $3.28$ / $3.41$ / $3.19$. These results corroborate the human-study findings at corpus scale and confirm that our five-step cross-modal CoT translates to more decision-actionable explanations than generic CoT.

\subsection{Qualitative Analysis}

Table~\ref{tab:qualitative} traces three executions that exercise distinct modules of \method{}, confirming that each phase contributes a non-redundant capability.
Trace~A shows Phase~1's \textsc{QueryGen} issuing orthogonal affordance queries for the same café under two contexts, producing confidence scores of 0.88 (solo-work) and 0.79 (group celebration); a static embedding would produce a single, context-blind representation incapable of this divergence.
Trace~B shows Phase~2's Step~4 cross-modal synthesis generating the actionable constraint ``Advance booking is strictly advisable'' from the joint reading of visual capacity, review sentiment, and metadata volume, none of which individually warrants this conclusion, directly corresponding to the $-0.060$ NDCG penalty in ablation A2.
Trace~C shows Phase~2's Step~5 conflict resolution flagging a temporal contradiction between an outdated image and a high-recency review, after which Phase~3's uncertainty penalty reduces the effective confidence to $\widetilde{\mathrm{conf}}_i = 0.36$, causing the venue to rank lower despite its visual appeal. All the three traces illustrate that Phase~4's final ranking score is not a black-box similarity but a structured aggregation of per-query verdicts with traceable provenance, enabling post-hoc auditing that embedding-based and standard chain-of-thought baselines cannot support.

\section{Conclusion}
\label{sec:conclusion}

\method{} establishes context-conditioned affordance reasoning as a
principled alternative to static POI embeddings for Point-of-Interest
recommendation.
Given situational context $c$, an LLM agent generates $K$ affordance queries
and performs five-step chain-of-thought reasoning jointly over image, review,
and metadata evidence, producing a representation that is
structurally different for every (POI, context) pair.
On three benchmarks, \method{} achieves NDCG@10\,=\,0.334 (+23.2\% over
the strongest baseline) under standard evaluation, and
NDCG@10\,=\,0.249 for cold-start POIs where ID-based methods fail.
The same framework applies without modification to context-shift scenarios,
with a relative degradation of only \textbf{7.5\%} (from NDCG@10 $0.334 \to 0.309$) versus \textbf{16.3\%--17.3\%} for the strongest baselines (IM-POI: $0.252\to0.211$; CoT-Rec: $0.271\to0.224$).
We hope \method{} motivates a broader rethinking of recommendation
representations: what a venue \emph{affords} a visitor under a specific
context is a more fundamental unit than what a venue \emph{is}.

\bibliographystyle{ACM-Reference-Format}
\bibliography{references}










\end{document}